\documentclass[11pt]{article}
\usepackage[margin=1in]{geometry}
\usepackage{amsmath,amssymb,amsthm,mathtools}
\usepackage{booktabs}
\usepackage{enumitem}
\usepackage{array}
\usepackage{algorithm}
\usepackage{algpseudocode}
\usepackage{placeins}
\usepackage{hyperref}
\hypersetup{hidelinks}
\usepackage{microtype}

\newcolumntype{L}[1]{>{\raggedright\arraybackslash}p{#1}}

\newtheorem{definition}{Definition}

\newtheorem{theorem}{Theorem}
\newtheorem{corollary}{Corollary}
\newtheorem{proposition}{Proposition}

\newcommand{\Rset}{\mathcal R}
\newcommand{\Eset}{\mathcal E}
\newcommand{\Fadm}{\mathcal F_{\mathrm{adm}}}
\newcommand{\Pe}{\mathcal P_e}
\newcommand{\Oset}{\mathcal O}
\newcommand{\Aset}{\mathcal A}
\newcommand{\Cset}{\mathsf C}
\newcommand{\Rls}{\mathsf R}
\newcommand{\Dset}{\mathsf D}
\newcommand{\Lset}{\mathsf L}
\newcommand{\Block}{\mathsf{block}}
\newcommand{\supp}{\operatorname{supp}}
\newcommand{\partic}{\operatorname{part}}
\newcommand{\resmag}{\mu}
\newcommand{\kappares}{\kappa}
\newcommand{\certadm}{\operatorname{adm}_{\mathrm{cert}}}
\newcommand{\resadm}{\operatorname{adm}_{\mathrm{res}}}
\newcommand{\starop}{\operatorname{St}}
\newcommand{\labels}{\operatorname{Lab}}
\newcommand{\plabels}{\operatorname{PLab}}

\title{Quotient Admission Algorithms for Witness-Supported Graph Windows}
\author{Yushan Li}
\date{}

\begin{document}
\maketitle

\begin{abstract}
We formulate the quotient admission problem for finite graph-window rows. The input is a finite row set, an admissible evidence map, semantic labels, witness-support hypergraphs, and atom-level admissibility predicates. The output is a quotient decision on evidence atoms, with possible decisions certificate, residual, low-confidence, or blocked. The problem asks for the maximal guard-respecting atom-level decision map that uses no refinement beyond the admissible evidence partition. We prove an atom-union characterization of identifiable classes, give a witness-support hypergraph guard for certificate admission, characterize projected-label conflicts as blocked atoms, and present quotient admission algorithms with correctness, maximality, and complexity guarantees. With explicit evidence vectors and hyperedges, the algorithms run in expected $O(B+I+n)$ time and space by hashing and deterministic $O(B+I+n\log n)$ time by sorting under a key-linear comparison model, where $n$ is the number of rows, $B$ is the total evidence encoding length, and $I$ is the total hyperedge incidence size. We also prove a magnitude-only indistinguishability lower bound: any evaluator that observes only residual magnitudes fails on instances whose evidence atoms require different residual decisions after the magnitudes collapse them.
\end{abstract}

\section{Introduction}
Many finite decision procedures first construct local rows and then decide what can be exposed from the information carried by each row. The row may come from a graph window, a local consistency check, a finite hypothesis family, or a candidate-and-witness construction. Once the row is fixed, the algorithmic object is a quotient: which decision is identifiable from the admissible evidence, and which row classes must be blocked because the evidence partition cannot separate their projected decisions?

This paper studies that question as a finite combinatorial problem. A finite row set $\Rset$ is mapped by an admissible evidence map $e:\Rset\to\Eset$ into an evidence space. The fibers of $e$ form an atom partition $\Pe$. Any decision rule that uses only admissible evidence must be constant on these atoms and therefore factors through the quotient map $q_e:\Rset\to\Pe$. The remaining task is to define a feasible and maximal atom-level admission map
\[
F:\Pe\to\{\Cset,\Rls,\Lset,\Block\},
\]
where $\Cset$ denotes certificate admission, $\Rls$ denotes residual admission, $\Lset$ denotes low-confidence admission, and $\Block$ denotes a blocked atom.

The witness structure is represented by finite hypergraphs. For each row $r$, a hypergraph $H_r=(V_r,\mathcal W_r)$ contains the candidate token $q_r$ and witness-support tokens. The support surface is the union of all witness hyperedges, and candidate participation records whether some witness hyperedge contains $q_r$. Certificate admission requires both a certificate-admissibility predicate and this hypergraph guard. Residual admission is governed by a separate residual-admissibility predicate and a support-compatibility predicate $\kappa$. This separation keeps residual magnitude, support compatibility, and admission feasibility as distinct finite predicates.

Several application areas instantiate this row model. In graph optimization and factor-graph reasoning, local windows and loop candidates often carry witness and residual information~\cite{kuemmerle2011g2o,grisetti2010tutorial,dellaert2017factorgraphs,mangelson2018pcm,yang2024groupk,tian2016multimodel,burnett2023tbv}. In selective prediction, reject-option classification, and learning to defer, an algorithm exposes a prediction only on selected inputs~\cite{chow1970reject,elyaniv2010selective,cortes2016rejection,geifman2019selectivenet,madras2018defer}. The present paper abstracts the common finite quotient structure: evidence atoms, witness-support set systems, blocking of indistinguishable projected conflicts, and conservative admission.

\paragraph{Contributions.}
The contributions are as follows.
\begin{enumerate}[leftmargin=2em]
    \item We define the quotient admission problem on a finite row space with admissible evidence atoms, semantic labels $\Oset=\{\Cset,\Rls,\Dset,\Lset\}$, and admission decisions $\Aset=\{\Cset,\Rls,\Lset,\Block\}$.
    \item We prove that identifiable classes are exactly unions of evidence atoms. Consequently, the identifiable classes form a Boolean algebra isomorphic to $2^{\Pe}$ and there are $2^{|\Pe|}$ identifiable classes.
    \item We model witness support as a finite hypergraph and show, under a certificate-validity condition, that certificate admission requires a support guard: nonempty support surface, candidate participation, and a certificate-admissibility predicate.
    \item We give \textsc{BuildAtoms} and \textsc{QuotientAdmission}, prove correctness and maximality under the no-hidden-refinement rule, and prove expected linear and deterministic near-linear complexity bounds under explicit encodings.
    \item We prove a magnitude-only indistinguishability lower bound showing that residual magnitudes cannot identify residual admission after they collapse atoms with different support compatibility.
\end{enumerate}

\paragraph{Organization.}
Section~\ref{sec:related} places the problem among partition refinement, finite quotients, hypergraphs, CSP-style filtering, and application-facing admission problems. Section~\ref{sec:problem} states the quotient admission problem. Sections~\ref{sec:atoms} and~\ref{sec:support} give the atom and hypergraph structure. Section~\ref{sec:algorithms} presents the algorithms, correctness and maximality theorem, and complexity bound. Section~\ref{sec:lower-bound} gives the magnitude-only lower bound. Section~\ref{sec:example} gives a finite example.

\section{Related Algorithmic and Combinatorial Problems}\label{sec:related}
Partition refinement and finite quotient construction are the closest algorithmic relatives. Classical partition-refinement algorithms group finite states into equivalence classes that cannot be distinguished by the available observations or transitions~\cite{hopcroft1971nlogn,paige1987partition}. Here the partition is induced directly by an evidence map, and the algorithm computes the admissible quotient decision on the resulting atoms.

Hypergraphs and set systems provide the witness-support language~\cite{berge1989hypergraphs}. In quotient admission, hyperedges define support surfaces and candidate participation predicates that act as guards on atom-level decisions.

Finite-domain constraint satisfaction and partial constraint satisfaction give a useful admission and filtering vocabulary~\cite{freuder1992partial,brailsford1999csp}. In those settings one filters or relaxes assignments. In quotient admission, the assignment space is replaced by evidence atoms, and feasibility is determined by whether the desired decision can be made without refining the evidence partition.

Structural consistency selection methods reason over compatible measurement sets and higher-order consistency structures~\cite{mangelson2018pcm,yang2024groupk}. Their objective is to select consistent sets. The quotient admission problem instead assumes a finite row representation and asks which atom-level routing labels are identifiable from that representation.

Reject-option classification, selective prediction, and learning to defer introduce output decisions that abstain from ordinary prediction~\cite{chow1970reject,elyaniv2010selective,cortes2016rejection,geifman2019selectivenet,madras2018defer}. Quotient admission replaces the statistical risk objective in those settings by a finite evidence partition and hypergraph guards.

Graph optimization, factor graphs, loop verification, synchronization, and certifiable estimation motivate row constructions in which witness and residual information coexist~\cite{kuemmerle2011g2o,grisetti2010tutorial,dellaert2017factorgraphs,tian2016multimodel,burnett2023tbv,singer2011angular,rosen2019sesync,yang2021teaser}. These domains supply examples of finite rows with graph-derived evidence; the results below analyze the induced quotient problem.

\section{Problem Statement}\label{sec:problem}
We first state the finite quotient admission problem in full. All sets in this section are finite.

\begin{definition}[Graph-window row]
A graph-window row is a tuple
\[
r=(G_r,q_r,W_r,\eta_r),
\]
where $G_r$ is a finite graph window, $q_r$ is a distinguished candidate token, $W_r$ is a finite witness pool, and $\eta_r$ is any remaining admissible metadata.
\end{definition}

\begin{definition}[Admissible evidence and atoms]
Let $e:\Rset\to\Eset$ be an admissible evidence map. The evidence sigma-algebra is
\[
\Fadm=\sigma(e).
\]
Define an equivalence relation $\sim_e$ on $\Rset$ by
\[
r\sim_e r' \quad\Longleftrightarrow\quad e(r)=e(r').
\]
For every $v\in\operatorname{im}(e)$, define the atom
\[
A_v=e^{-1}(v).
\]
The atom partition $\Pe=\{A_v:v\in\operatorname{im}(e)\}$ is the set of $\sim_e$-equivalence classes, and the quotient map $q_e:\Rset\to\Pe$ sends a row to its evidence atom.
\end{definition}

\begin{definition}[Semantic labels and admission decisions]
The semantic label set is
\[
\Oset=\{\Cset,\Rls,\Dset,\Lset\},
\]
where $\Cset$ denotes certificate-primary, $\Rls$ denotes residual-primary, $\Dset$ denotes dual information, and $\Lset$ denotes low-confidence. The admission output set is
\[
\Aset=\{\Cset,\Rls,\Lset,\Block\}.
\]
The conservative projection $\pi:\Oset\to\Aset$ is
\[
\pi(\Cset)=\Cset,\qquad
\pi(\Rls)=\Rls,\qquad
\pi(\Dset)=\Lset,\qquad
\pi(\Lset)=\Lset.
\]
\end{definition}

\begin{definition}[Quotient admission input]
An instance of the quotient admission problem consists of
\[
(\Rset,e,\ell,\{H_r\}_{r\in\Rset},\supp,\partic,\certadm,\kappares,\resadm),
\]
where $\ell:\Rset\to\Oset$ is a semantic labeling, $H_r$ is a witness-support hypergraph for row $r$, and
\[
\supp,\partic,\certadm,\kappares,\resadm:\Rset\to\{0,1\}
\]
are $\Fadm$-measurable predicates. The predicates respectively encode nonempty support, candidate participation, certificate admissibility, residual support compatibility, and residual admissibility.
\end{definition}

\begin{definition}[Feasible quotient admission map]
A feasible quotient admission map is a function
\[
F:\Pe\to\Aset
\]
satisfying the following constraints for every atom $A\in\Pe$.
\begin{enumerate}[label=(F\arabic*),leftmargin=2.4em]
    \item \textbf{Atom-level output.} The decision depends only on $A$, not on a row inside $A$.
    \item \textbf{Projected-label blocking.} $F(A)=\Block$ if and only if $\pi\circ\ell$ is not constant on $A$.
    \item \textbf{Certificate guard.} If $F(A)=\Cset$, then $(\pi\circ\ell)(A)=\Cset$, $\certadm(A)=1$, $\supp(A)=1$, and $\partic(A)=1$.
    \item \textbf{Residual guard.} If $F(A)=\Rls$, then $(\pi\circ\ell)(A)=\Rls$, $\resadm(A)=1$, and $\kappares(A)=1$.
    \item \textbf{Conservative projection.} If $(\pi\circ\ell)(A)=\Lset$ and $\pi\circ\ell$ is constant on $A$, then $F(A)=\Lset$.
    \item \textbf{No hidden refinement.} No branch of $F$ may use a predicate that is not $\Fadm$-measurable.
\end{enumerate}
Here $(\pi\circ\ell)(A)$ and the predicate values on $A$ are well defined whenever the relevant function is constant on $A$.
\end{definition}

\begin{definition}[Maximal feasible map]
A feasible quotient admission map is maximal if, for every atom $A$ on which $\pi\circ\ell$ is constant, it outputs $\Cset$ whenever the projected label and certificate guard allow $\Cset$, and it outputs $\Rls$ whenever the projected label and residual guard allow $\Rls$. All remaining nonblocked atoms are mapped to $\Lset$.
\end{definition}

\section{Evidence Atoms and Boolean Structure}\label{sec:atoms}
The first structural fact is the standard finite quotient characterization. It is stated here because every later admissibility constraint is imposed on atoms.

\begin{theorem}[Atom-union characterization]\label{thm:atom-union}
For $P\subseteq\Rset$, the following are equivalent:
\begin{enumerate}[label=(\roman*),leftmargin=2em]
    \item $P$ is $\Fadm$-measurable.
    \item $P$ is a union of atoms in $\Pe$.
    \item There exists $S\subseteq\operatorname{im}(e)$ such that $P=e^{-1}(S)$.
\end{enumerate}
Moreover, every map $f:\Rset\to X$ into a finite set $X$ that is constant on atoms factors uniquely as $f=\bar f\circ q_e$ for a map $\bar f:\Pe\to X$.
\end{theorem}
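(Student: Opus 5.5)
I will prove the equivalences in a cycle (i)$\Rightarrow$(iii)$\Rightarrow$(ii)$\Rightarrow$(i) and then derive the factorization statement from (ii). The key preliminary point is to fix what $\sigma(e)$ means on a finite set. Equip the finite set $\Eset$ (or at least $\operatorname{im}(e)$) with its power set as $\sigma$-algebra. Then $\Fadm=\sigma(e)=\{e^{-1}(T):T\subseteq\Eset\}$. I will verify directly that this family is a $\sigma$-algebra, since preimages commute with complements and unions. It is also the smallest one making $e$ measurable, since each of its members must lie in any such $\sigma$-algebra. Once this description is in place, the equivalences are essentially bookkeeping.

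For (i)$\Rightarrow$(iii): if $P=e^{-1}(T)$ with $T\subseteq\Eset$, set $S=T\cap\operatorname{im}(e)$. Then $e^{-1}(S)=e^{-1}(T)=P$, because points of $T$ outside the image have empty preimage. For (iii)$\Rightarrow$(ii): write $e^{-1}(S)=\bigcup_{v\in S}e^{-1}(v)=\bigcup_{v\in S}A_v$, which is a union of atoms. For (ii)$\Rightarrow$(i): if $P=\bigcup_{v\in S}A_v$, then $P=e^{-1}(S)$, and this set lies in $\Fadm$ by the description above. I will also record that $S$ in (iii) is unique. The reason is that the atoms $A_v$, $v\in\operatorname{im}(e)$, are nonempty and pairwise disjoint, so $v\mapsto A_v$ is a bijection $\operatorname{im}(e)\to\Pe$. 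This gives the later count $2^{|\Pe|}$, although that count is not needed here.

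For the factorization, suppose $f$ is constant on each atom. Define $\bar f(A)=f(r)$ for any $r\in A$. This is well defined by constancy, and $\bar f$ is defined on all of $\Pe$ because atoms are nonempty. Then $\bar f(q_e(r))=f(r)$ holds by construction, since $r\in q_e(r)$. For uniqueness, if $g\circ q_e=f$, then for every $A\in\Pe$ I pick $r\in A$ and get $g(A)=g(q_e(r))=f(r)=\bar f(A)$. This uses surjectivity of $q_e$.

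The only step needing care is the first one: pinning down $\sigma(e)$ as exactly the family of preimages of subsets, rather than only generated by them. I will handle it by the direct closure check above. Finiteness of $X$ plays no role beyond matching the paper's setting.
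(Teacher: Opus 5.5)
Your proof is correct and takes essentially the same route as the paper: you identify $\sigma(e)$ with the preimages of subsets of $\operatorname{im}(e)$ (the paper's ``finite Boolean algebra generated by the fibers''), read off the equivalences, and get the factorization from representatives plus surjectivity of $q_e$. Your explicit choice of the power-set $\sigma$-algebra on the codomain usefully makes precise an assumption the paper leaves implicit.
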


\begin{proof}
Since $\Rset$ is finite, $\sigma(e)$ is the finite Boolean algebra generated by the fibers of $e$. Thus every $\Fadm$-measurable set is a union of fibers, and the fibers are exactly the atoms in $\Pe$. Conversely, any union of atoms is the inverse image under $e$ of the corresponding subset of $\operatorname{im}(e)$. This proves the equivalence.

For factorization, define $\bar f(A)=f(r)$ for any $r\in A$. The definition is independent of the representative because $f$ is constant on atoms. Since $q_e$ is surjective onto $\Pe$, the factorization is unique.
\end{proof}

\begin{corollary}[Boolean algebra of identifiable classes]\label{cor:boolean}
The identifiable subclasses of $\Rset$ form a finite Boolean algebra isomorphic to $2^{\Pe}$.
\end{corollary}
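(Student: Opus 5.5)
The plan is to read ``identifiable subclass'' as an $\Fadm$-measurable subset $P\subseteq\Rset$, so that the family in question is exactly $\Fadm=\sigma(e)$ with union, intersection and complement relative to $\Rset$. The proof then reduces entirely to Theorem~\ref{thm:atom-union}. The key step is to define
\[
\Phi:2^{\Pe}\to\Fadm,\qquad \Phi(\mathcal S)=\bigcup_{A\in\mathcal S}A,
\]
and to show that $\Phi$ is a Boolean algebra isomorphism.

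The steps, in order, are as follows.
\begin{enumerate}[leftmargin=2em]
\item \emph{Well-definedness and surjectivity.} By (ii)$\Rightarrow$(i) of Theorem~\ref{thm:atom-union}, every $\Phi(\mathcal S)$ is $\Fadm$-measurable, so $\Phi$ lands in $\Fadm$. By (i)$\Rightarrow$(ii), every measurable $P$ is a union of atoms, so $\Phi$ is onto.
\item \emph{Injectivity.} The atoms are nonempty, since each $A_v$ is the fiber over a point $v\in\operatorname{im}(e)$, and they are pairwise disjoint. Hence $\mathcal S$ is recovered from its image as $\mathcal S=\{A\in\Pe: A\subseteq\Phi(\mathcal S)\}$, because an atom outside $\mathcal S$ is disjoint from $\Phi(\mathcal S)$.
\item \emph{Preservation of operations.} $\Phi$ preserves unions immediately. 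It preserves intersections because the atoms are disjoint. It preserves complements because the atoms cover $\Rset$: $\Rset\setminus\Phi(\mathcal S)=\Phi(\Pe\setminus\mathcal S)$. It also sends $\emptyset\mapsto\emptyset$ and $\Pe\mapsto\Rset$.
\item \emph{Conclusion.} $\Fadm$ is a Boolean algebra isomorphic to $2^{\Pe}$. In particular it is finite, with $2^{|\Pe|}$ elements.
\end{enumerate}

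None of this is hard. The only point needing care is nonemptiness of the atoms. Without it, injectivity would fail: distinct subsets of $\Pe$ differing only by an empty atom would give the same union. Here, however, $\Pe$ is indexed by $\operatorname{im}(e)$ and not by all of $\Eset$, so every atom is nonempty. I would state this explicitly in the proof.
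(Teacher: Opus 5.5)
Your proposal is correct and takes the same route as the paper: both reduce to Theorem~\ref{thm:atom-union}, identify each measurable set with the set of atoms it contains, and match the Boolean operations with union, intersection, and complement of atom sets. Your version spells out the map $\Phi$ and uses the nonemptiness and disjointness of the atoms in $\Pe$, which the paper's phrase ``uniquely determined by the set of atoms that it contains'' leaves implicit.
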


\begin{proof}
By Theorem~\ref{thm:atom-union}, every identifiable subclass is uniquely determined by the set of atoms that it contains. The Boolean operations correspond to union, intersection, and complement of these atom index sets.
\end{proof}

\begin{corollary}[Counting identifiable classes]\label{cor:count}
The number of identifiable classes is $2^{|\Pe|}$.
\end{corollary}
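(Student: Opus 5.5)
The count follows directly from Corollary~\ref{cor:boolean}, or equivalently from Theorem~\ref{thm:atom-union}. I would exhibit an explicit bijection between identifiable classes and subsets of $\Pe$, and then count the subsets of a finite set.

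\textbf{Step 1: the bijection.} Define $\Phi:2^{\Pe}\to\{P\subseteq\Rset : P\ \text{is }\Fadm\text{-measurable}\}$ by
\[
\Phi(\mathcal S)=\bigcup_{A\in\mathcal S}A .
\]
By the implication (ii)$\Rightarrow$(i) of Theorem~\ref{thm:atom-union}, every value of $\Phi$ is identifiable. By (i)$\Rightarrow$(ii), every identifiable class is a union of atoms, so $\Phi$ is surjective.

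\textbf{Step 2: injectivity.} The atoms are the fibers of $e$ over $\operatorname{im}(e)$, so they are pairwise disjoint. Each atom is also nonempty, since $A_v=e^{-1}(v)$ with $v\in\operatorname{im}(e)$. Hence the set $\mathcal S$ can be recovered from $\Phi(\mathcal S)$ as $\mathcal S=\{A\in\Pe : A\subseteq\Phi(\mathcal S)\}$. In more detail: if $\mathcal S\neq\mathcal S'$, pick an atom $A$ in the symmetric difference and a row $r\in A$. Then $r$ lies in exactly one of $\Phi(\mathcal S)$ and $\Phi(\mathcal S')$, because $r$ belongs to no other atom.

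\textbf{Step 3: counting.} Since $\Phi$ is a bijection, the number of identifiable classes equals $|2^{\Pe}|=2^{|\Pe|}$. This is finite because $\Rset$ is finite, and so is $\Pe$.

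The only point needing care is nonemptiness of atoms in Step 2. Without it, distinct atom sets could give the same union. This is guaranteed because atoms are indexed by $\operatorname{im}(e)$ rather than by all of $\Eset$.
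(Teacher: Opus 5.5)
Your proof is correct and takes the same route as the paper, whose one-line argument is that, by Theorem~\ref{thm:atom-union}, each identifiable class is uniquely specified by an arbitrary subset of $\Pe$. Your Steps 1--2 make that bijection explicit, including the disjointness and nonemptiness of atoms that the paper leaves implicit.
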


\begin{proof}
Each identifiable class is uniquely specified by choosing an arbitrary subset of the atom set $\Pe$.
\end{proof}

\begin{corollary}[Exact-twin indistinguishability]\label{cor:twins}
If $r,r'\in\Rset$ satisfy $e(r)=e(r')$, then every admissible evaluator assigns them the same quotient decision. Hence rows whose projected semantic labels differ inside one atom cannot be separated without violating the no-hidden-refinement rule.
\end{corollary}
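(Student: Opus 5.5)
The plan is to reduce the corollary to the factorization part of Theorem~\ref{thm:atom-union}, after first making precise what an ``admissible evaluator'' is. I will take it to mean any decision rule $g:\Rset\to\Aset$ that uses only $\Fadm$-measurable information, which is exactly what constraint (F6) demands. Equivalently, for each decision $a\in\Aset$ the level set $g^{-1}(a)$ is $\Fadm$-measurable. Since $\Aset$ is finite, this formulation is the natural one.

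\textbf{First step.} By Theorem~\ref{thm:atom-union}, (i)$\Rightarrow$(ii), each level set $g^{-1}(a)$ is a union of atoms in $\Pe$. The level sets partition $\Rset$, so each atom lies entirely inside exactly one of them. Hence $g$ is constant on every atom. If $e(r)=e(r')$, then $r$ and $r'$ lie in the same atom $A_{e(r)}$, and therefore $g(r)=g(r')$.

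\textbf{Second step.} By the factorization clause of Theorem~\ref{thm:atom-union}, $g=\bar g\circ q_e$. This shows that the quotient decision attached to $r$ and $r'$ is the single value $\bar g(q_e(r))=\bar g(q_e(r'))$, which connects the first step to the atom-level formulation (F1).

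\textbf{Second sentence of the corollary.} Suppose $r,r'$ lie in one atom $A$ with $(\pi\circ\ell)(r)\neq(\pi\circ\ell)(r')$. Any evaluator that ``separates'' them gives them different decisions, for instance decisions matching their respective projected labels. Such an evaluator is not constant on $A$. By the first step, some level set of it is then not a union of atoms. By Theorem~\ref{thm:atom-union}, (ii)$\Rightarrow$(i) in contrapositive form, that level set is not $\Fadm$-measurable. The evaluator therefore branches on a non-$\Fadm$-measurable predicate, which violates (F6). It follows that the only way to separate such rows is hidden refinement, and this is consistent with the forced output $\Block$ in (F2).

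\textbf{Main obstacle.} The mathematics here is routine. The only delicate point is interpretive: I must fix the definition of ``admissible evaluator'' as measurability of all its decision branches, so that the statement does not become circular. Once that definition is stated explicitly, each remaining step is a direct citation of Theorem~\ref{thm:atom-union}.
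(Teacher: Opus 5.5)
Your proof is correct and follows the paper's route: the paper notes that $r,r'$ lie in one atom and that every admissible evaluator factors through $q_e$, and your level-set argument via Theorem~\ref{thm:atom-union} makes that step explicit. One small citation slip: the step ``not a union of atoms $\Rightarrow$ not $\Fadm$-measurable'' is the contrapositive of (i)$\Rightarrow$(ii), not of (ii)$\Rightarrow$(i), which is harmless because the theorem is an equivalence.
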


\begin{proof}
The equality $e(r)=e(r')$ places $r$ and $r'$ in the same atom. Every admissible evaluator factors through $q_e$ and is therefore constant on that atom.
\end{proof}

\section{Witness-Support Hypergraphs}\label{sec:support}
The witness component is a finite set system. It supplies guards for certificate admission but does not refine the evidence partition unless the derived predicates are included in the admissible evidence map.

\begin{definition}[Witness-support hypergraph]
For row $r$, let $H_r=(V_r,\mathcal W_r)$ be a finite hypergraph with $q_r\in V_r$. The vertices encode the candidate token and witness-support tokens. Each hyperedge $W\in\mathcal W_r$ satisfies $W\subseteq V_r$, so membership statements such as $q_r\in W$ are well defined. Define
\[
\Sigma_r^{\mathrm{sup}}=\bigcup_{W\in\mathcal W_r} W,\qquad
\starop_r(q_r)=\bigcup\{W\in\mathcal W_r:q_r\in W\}.
\]
The set $\Sigma_r^{\mathrm{sup}}$ is the support surface and $\starop_r(q_r)$ is the candidate-support surface. The row predicates are
\[
\supp(r)=\mathbf 1[\Sigma_r^{\mathrm{sup}}\neq\varnothing],
\qquad
\partic(r)=\mathbf 1[\starop_r(q_r)\neq\varnothing].
\]
\end{definition}

\begin{definition}[Atom-level guards]
The predicates $\certadm$, $\kappares$, and $\resadm$ are atom-level, $\Fadm$-measurable predicates. The same is required of $\supp$ and $\partic$ when they are used by a quotient admission map. Thus each predicate has a well-defined value on every atom $A\in\Pe$.
\end{definition}

\begin{definition}[Certificate-valid atom]
An atom $A$ is certificate-valid if some row $r\in A$ satisfies $\Sigma_r^{\mathrm{sup}}\neq\varnothing$, there exists $W\in\mathcal W_r$ with $q_r\in W$, and $\certadm(A)=1$. Because the predicates are $\Fadm$-measurable, these conditions then hold throughout the atom.
\end{definition}

\begin{proposition}[Certificate-validity implies the support guard]\label{prop:support-guard}
Let $F:\Pe\to\Aset$ be an atom-level decision map that uses no hidden refinement and assigns $F(A)=\Cset$ only when $A$ is certificate-valid. Then $F(A)=\Cset$ implies $\certadm(A)=1$, $\supp(A)=1$, and $\partic(A)=1$.
\end{proposition}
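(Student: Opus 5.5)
The argument is a direct unfolding of definitions. Fix an atom $A$ with $F(A)=\Cset$. By hypothesis $A$ is certificate-valid, so by the definition of a certificate-valid atom there is a witness row $r_0\in A$ such that $\Sigma_{r_0}^{\mathrm{sup}}\neq\varnothing$, some hyperedge $W_0\in\mathcal W_{r_0}$ contains $q_{r_0}$, and $\certadm(A)=1$. The third conclusion, $\certadm(A)=1$, is then immediate. It is well defined on $A$ because $\certadm$ is an atom-level, $\Fadm$-measurable predicate (Definition of atom-level guards), so by Theorem~\ref{thm:atom-union} it is constant on atoms.

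For the support guard, I will evaluate the row predicates at $r_0$. First, $\supp(r_0)=\mathbf 1[\Sigma_{r_0}^{\mathrm{sup}}\neq\varnothing]=1$. Second, $q_{r_0}\in W_0$, so $W_0$ belongs to the family $\{W\in\mathcal W_{r_0}:q_{r_0}\in W\}$. Hence $q_{r_0}\in W_0\subseteq\starop_{r_0}(q_{r_0})$, so the candidate-support surface is nonempty and $\partic(r_0)=1$. This is the only small combinatorial point: membership of $q_{r_0}$ in one hyperedge already witnesses nonemptiness of the star, without appealing to the support surface.

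The step that needs care is passing from the single row $r_0$ to the atom value. Since $F$ uses no hidden refinement (F6), $\supp$ and $\partic$ are required to be $\Fadm$-measurable whenever they are used by the map. By Theorem~\ref{thm:atom-union}, the sets $\{r:\supp(r)=1\}$ and $\{r:\partic(r)=1\}$ are unions of atoms, so each predicate is constant on $A$. The factorization part of that theorem then gives well-defined values $\supp(A)=\supp(r_0)=1$ and $\partic(A)=\partic(r_0)=1$.

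The main obstacle is justifying this measurability when $F$ does not branch on $\supp$ or $\partic$ directly. I would handle it by appealing to the standing convention in the atom-level guards definition and the remark in the certificate-valid definition that the conditions hold throughout the atom. If needed, I would make explicit that the proposition presupposes this convention.
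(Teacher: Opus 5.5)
Your proposal is correct and follows essentially the same argument as the paper. You take a witness row in the certificate-valid atom, read off $\supp=1$ and $\partic=1$ at that row, and use $\Fadm$-measurability to transfer these values to the whole atom. The measurability concern you flag is already settled by the definition of a quotient admission instance, which declares $\supp$ and $\partic$ to be $\Fadm$-measurable. So you do not need to lean on (F6), which by itself only constrains predicates that $F$ actually uses.
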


\begin{proof}
Fix an atom $A$ with $F(A)=\Cset$. By the hypothesis on $F$, the atom $A$ is certificate-valid. Hence $A$ contains a row $r$ with $\Sigma_r^{\mathrm{sup}}\neq\varnothing$, so $\supp(r)=1$. The same row has a witness hyperedge $W\in\mathcal W_r$ with $q_r\in W$, so $\starop_r(q_r)\neq\varnothing$ and $\partic(r)=1$. Certificate validity also gives $\certadm(A)=1$. Since $\supp$ and $\partic$ are $\Fadm$-measurable, they are constant on $A$. Therefore $\certadm(A)=1$, $\supp(A)=1$, and $\partic(A)=1$.
\end{proof}

\begin{proposition}[Residual admissibility is not residual magnitude]\label{prop:residual-gate}
Residual magnitude can be used only after residual admissibility and support compatibility have been established at the atom level. In particular, a feasible quotient admission map cannot replace $\resadm$ and $\kappares$ by a scalar residual magnitude unless that scalar is sufficient to recover those predicates on atoms.
\end{proposition}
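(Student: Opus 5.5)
The plan is to formalize "replace $\resadm$ and $\kappares$ by a scalar residual magnitude" and then argue from the feasibility constraints (F4) and (F6) together with the factorization in Theorem~\ref{thm:atom-union}. Let $\resmag:\Rset\to\mathbb R$ be a residual magnitude; we assume it is $\Fadm$-measurable, since otherwise (F6) already forbids its use. By Theorem~\ref{thm:atom-union} it factors as $\resmag=\bar\resmag\circ q_e$. A \emph{magnitude-only} residual branch is a rule that decides $F(A)=\Rls$ on the atoms with projected label $\Rls$ as $g(\bar\resmag(A))$ for some $g:\mathbb R\to\{0,1\}$. We say $\resmag$ is \emph{sufficient on atoms} if there is a $g$ with $g(\bar\resmag(A))=\resadm(A)\wedge\kappares(A)$ for every such atom $A$.

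For the first direction, suppose the magnitude-only rule is feasible and maximal. Feasibility (F4) forces $g(\bar\resmag(A))=1\Rightarrow\resadm(A)=\kappares(A)=1$. Maximality forces the converse, since every atom where the residual guard allows $\Rls$ must receive $\Rls$. Hence $g\circ\bar\resmag$ equals $\resadm\wedge\kappares$ on the relevant atoms, which is exactly sufficiency. In this sense the magnitude can only be used once the guard is recovered: any sound use of it is a predicate implying the guard. This also gives the first sentence of the statement, because a rule of the form ``check $\resadm\wedge\kappares$, then threshold $\resmag$'' is feasible while threshold-only rules need not be.

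For the converse and the ``cannot'' part, I would argue by contrapositive. If sufficiency fails, there are atoms $A\neq A'$, both with constant projected label $\Rls$, such that $\bar\resmag(A)=\bar\resmag(A')$ but $(\resadm\wedge\kappares)(A)\neq(\resadm\wedge\kappares)(A')$. Sufficiency can only fail in this way, because if $\bar\resmag$ separated all guard values we could define $g$ on its image. Any $g$ then gives both atoms the same answer. If that answer is $\Rls$, constraint (F4) is violated on the atom whose guard is $0$. If it is not $\Rls$, maximality fails on the atom whose guard is $1$. This is a collision argument of the same kind as Corollary~\ref{cor:twins}, but applied to the coarser partition induced by $\bar\resmag$ instead of by $e$.

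The main obstacle is that the statement is informal, so the real work is choosing the formalization: what ``replace'' means (a rule factoring through $\bar\resmag$) and whether maximality or only feasibility is demanded. With feasibility alone, the constant rule $g\equiv0$ trivially works, so I would state explicitly that replacement means matching the maximal map on residual-labelled atoms. I would flag that choice as the interpretive step. Everything else is bookkeeping.
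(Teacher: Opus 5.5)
Your proof is correct, and its core is the same as the paper's. Constraint (F4) makes $\resadm(A)=1$ and $\kappares(A)=1$ necessary for $F(A)=\Rls$, so a magnitude can stand in for these predicates only if residual admission is a function of it on atoms. The paper's proof is a three-sentence remark to this effect and never says what ``replace'' means. You make it precise by factoring $\resmag$ through $q_e$, restricting to atoms whose projected label is constantly $\Rls$, and requiring agreement with the maximal map.

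That buys two things the paper's version lacks. First, it shows why some maximality requirement is unavoidable: under feasibility alone $g\equiv 0$ is a valid replacement, so the claim would be vacuous. Second, it pins down the exact condition: the conjunction $\resadm\wedge\kappares$ must be a function of $\bar{\resmag}$ on residual-labelled atoms. This is tighter than the paper's phrasing that ``the corresponding atom-level predicates are functions of that scalar.'' Read literally, that phrasing overstates what is necessary. The two predicates need not be individually recoverable, nor recoverable off the residual-labelled atoms, for the replacement to succeed.

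Your collision step is essentially the argument the paper postpones to Theorem~\ref{thm:magnitude-lower}. You apply it inside the residual branch rather than to an evaluator that sees only $\resmag$. Two small points. Your second paragraph is the contrapositive of your first rather than its converse; the genuine converse, that sufficiency makes $g\circ\bar{\resmag}$ a feasible maximal residual branch, is immediate and not needed for the ``cannot \ldots\ unless'' claim. Also, Theorem~\ref{thm:atom-union} is stated for maps into a finite set, so apply it to $\resmag$ regarded as a map into its finite image $\resmag(\Rset)$.
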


\begin{proof}
The residual guard in the problem definition requires $\resadm(A)=1$ and $\kappares(A)=1$ before residual admission is possible. A scalar magnitude is not one of these predicates. It may be part of the admissible evidence vector, but it determines residual admission only if the corresponding atom-level predicates are functions of that scalar on the instance.
\end{proof}

\section{Algorithms and Correctness}\label{sec:algorithms}
The algorithms are stated on explicitly encoded finite inputs. \textsc{BuildAtoms} constructs the evidence atoms and verifies that guard predicates used by the admission rule are atom-level. \textsc{QuotientAdmission} applies the maximal feasible decision rule.

\begin{algorithm}[t]
\caption{\textsc{BuildAtoms}}
\label{alg:build-atoms}
\begin{algorithmic}[1]
\Require finite rows $\Rset$, evidence map $e$, labels $\ell$, hypergraphs $\{H_r\}$, predicates $\certadm,\kappares,\resadm$
\Ensure atom records or a report that a predicate is not $\Fadm$-measurable
\State initialize an empty dictionary $D$
\For{each row $r\in\Rset$}
    \State $v\gets e(r)$
    \State compute $\supp(r)$ and $\partic(r)$ from $H_r$
    \State append $(r,\ell(r),\supp(r),\partic(r),\certadm(r),\kappares(r),\resadm(r))$ to bucket $D[v]$
\EndFor
\State initialize an empty atom-record list $\mathcal B$
\For{each bucket $D[v]$}
    \State $A\gets\{r:(r,\cdot)\in D[v]\}$
    \If{any of $\supp,\partic,\certadm,\kappares,\resadm$ takes two values on $A$}
        \State \Return ``predicate not $\Fadm$-measurable''
    \EndIf
    \State create an atom record containing $A$, $\labels(A)=\{\ell(r):r\in A\}$, $\plabels(A)=\{\pi(\ell(r)):r\in A\}$, and the common predicate values on $A$
    \State append the record to $\mathcal B$
\EndFor
\State \Return $\mathcal B$
\end{algorithmic}
\end{algorithm}

\begin{algorithm}[t]
\caption{\textsc{QuotientAdmission}}
\label{alg:quotient-admission}
\begin{algorithmic}[1]
\Require atom records $\mathcal B$ produced by \textsc{BuildAtoms}
\Ensure quotient admission map $F:\Pe\to\Aset$
\For{each atom record $A\in\mathcal B$}
    \If{$|\plabels(A)|>1$}
        \State $F(A)\gets\Block$
    \ElsIf{$\plabels(A)=\{\Cset\}$ and $\certadm(A)=1$ and $\supp(A)=1$ and $\partic(A)=1$}
        \State $F(A)\gets\Cset$
    \ElsIf{$\plabels(A)=\{\Rls\}$ and $\resadm(A)=1$ and $\kappares(A)=1$}
        \State $F(A)\gets\Rls$
    \Else
        \State $F(A)\gets\Lset$
    \EndIf
\EndFor
\State \Return $F$
\end{algorithmic}
\end{algorithm}

\begin{theorem}[Correctness and maximality of quotient admission]\label{thm:correct-maximal}
Assume \textnormal{\textsc{BuildAtoms}} returns atom records rather than a measurability report. Then \textnormal{\textsc{QuotientAdmission}} returns a feasible quotient admission map. Moreover, it is maximal among all evaluators that use no hidden refinement and obey the same projected-label, certificate, residual, and conservative-projection guards.
\end{theorem}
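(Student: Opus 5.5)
The plan is to verify feasibility one constraint at a time and then derive maximality from a case split on the branch taken by \textsc{QuotientAdmission}. First I record what \textsc{BuildAtoms} guarantees when it returns records. The buckets $D[v]$ are exactly the fibers $e^{-1}(v)$, so the records are in bijection with $\Pe$. Moreover, each of $\supp,\partic,\certadm,\kappares,\resadm$ is constant on every atom, so its atom value is well defined. By Theorem~\ref{thm:atom-union}, each such predicate factors through $q_e$ and is $\Fadm$-measurable. The same holds for the set-valued map $A\mapsto\plabels(A)$: it is defined from the atom $A$ alone, hence it is a function of $q_e(r)$.

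Feasibility then follows by direct inspection. (F1) holds because $F$ is defined on records, which are atoms. (F6) holds because every branch reads only $\plabels(A)$ and the atom-constant predicate values, all functions on $\Pe$. For (F2), note that $\pi\circ\ell$ is non-constant on $A$ exactly when $|\plabels(A)|>1$. The first branch assigns $\Block$ in precisely that case, and no later branch assigns $\Block$, which gives both directions of the equivalence. (F3) and (F4) hold because the only branches that output $\Cset$ or $\Rls$ test exactly the required conditions; when $\plabels(A)$ is a singleton, $(\pi\circ\ell)(A)$ is its unique element. For (F5), if $\plabels(A)=\{\Lset\}$, the second and third branches fail on the label test, so the output is $\Lset$.

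For maximality, I fix an atom $A$ on which $\pi\circ\ell$ is constant and split into cases. If the projected label is $\Cset$ and the certificate guard holds, the second branch fires and outputs $\Cset$. If the projected label is $\Rls$ and the residual guard holds, the third branch fires and outputs $\Rls$; it cannot be preempted by the second branch, since the label tests are disjoint. In every other nonblocked case the output is $\Lset$. This matches the definition of a maximal feasible map.

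To compare with an arbitrary evaluator $G$ obeying the same guards and no hidden refinement, I show that whenever $G(A)\in\{\Cset,\Rls\}$, also $F(A)=G(A)$. By (F2), $G$ and $F$ block the same atoms. If $G(A)=\Cset$, then (F3) for $G$ forces the label and guard conditions that trigger $F$'s second branch. The case $G(A)=\Rls$ is symmetric, using (F4). If $G(A)=\Lset$ on a nonblocked atom, the output of $F$ is at least as permissive. Thus the set of atoms admitted to $\Cset$ or $\Rls$ by $G$ is contained in that of $F$, and $F$ is the unique maximal feasible map.

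The only subtle point is the identification of non-constancy of $\pi\circ\ell$ with $|\plabels(A)|>1$. Together with it comes the observation that $\plabels$ is legitimately atom-level even though $\ell$ itself need not be $\Fadm$-measurable. I would argue that (F2) explicitly licenses this quantity as an atom-level computation, so it does not count as hidden refinement.
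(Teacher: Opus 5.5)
Your proposal is correct and follows essentially the same route as the paper. Feasibility is checked by seeing which branch of \textsc{QuotientAdmission} can produce each output, and maximality by showing that any guard-respecting evaluator's $\Cset$ or $\Rls$ outputs force the corresponding branch to fire. You are somewhat more explicit than the paper: you verify both directions of (F2), and you note that $A\mapsto\plabels(A)$ is itself atom-level, hence $\Fadm$-measurable by Theorem~\ref{thm:atom-union}. These are refinements of the same argument rather than a different approach.
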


\begin{proof}
The output is atom-level by construction: \textsc{QuotientAdmission} assigns one value to each atom record. If an atom contains more than one projected label, the first branch blocks it, so projected exact-twin conflicts are not admitted. If the algorithm outputs $\Cset$, then the certificate branch was taken, which requires $\plabels(A)=\{\Cset\}$, $\certadm(A)=1$, $\supp(A)=1$, and $\partic(A)=1$. If it outputs $\Rls$, then the residual branch was taken, which requires $\plabels(A)=\{\Rls\}$, $\resadm(A)=1$, and $\kappares(A)=1$. Atoms whose projected label set is $\{\Lset\}$ fail both positive branches and are assigned $\Lset$; this includes atoms whose semantic labels are contained in $\{\Dset,\Lset\}$. Because \textsc{BuildAtoms} verifies atom-level predicate values, no branch uses information that is unavailable in $\Fadm$.

The algorithm never admits a residual decision from residual magnitude alone. Indeed, residual magnitude is not an input to the residual branch; the branch requires the atom-level predicates $\resadm$ and $\kappares$.

For maximality, let $G:\Pe\to\Aset$ be any feasible evaluator obeying the same guards. On any atom with nonconstant projected labels, feasibility forces $G(A)=\Block$, and the algorithm also outputs $\Block$. On an atom with projected label $\Cset$ and satisfied certificate guard, any feasible map may output $\Cset$, and the algorithm does so. On an atom with projected label $\Rls$ and satisfied residual guard, any feasible map may output $\Rls$, and the algorithm does so. On every other nonblocked atom, the corresponding positive guard is absent or the projected label is $\Lset$, so no guard-respecting evaluator can assign a positive decision that the algorithm omits. Hence the algorithm is maximal with respect to positive admissions under the stated feasibility constraints.
\end{proof}

\begin{theorem}[Complexity]\label{thm:complexity}
Let $n=|\Rset|$, let
\[
B=\sum_{r\in\Rset}|e(r)|,
\qquad
I=\sum_{r\in\Rset}\sum_{W\in\mathcal W_r}|W|,
\]
where $|e(r)|$ is the explicit encoding length of the evidence vector and $I$ is the total hyperedge incidence size. If evidence vectors and hyperedges are explicitly encoded, then \textnormal{\textsc{BuildAtoms}} followed by \textnormal{\textsc{QuotientAdmission}} runs in expected $O(B+I+n)$ time and space using hashing. In the deterministic comparison model where lexicographic sorting of encoded vectors has total key-inspection cost $O(B+n\log n)$, the running time is $O(B+I+n\log n)$ and the space usage is $O(B+I+n)$. Without that key-linear sorting assumption, the deterministic fallback bound is $O(I+n\log n\cdot L)$, where $L=\max_{r\in\Rset}|e(r)|$, plus the linear cost of storing the encoded evidence.
\end{theorem}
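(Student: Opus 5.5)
The plan is to charge the cost of each phase of the two algorithms to one of the three resources $B$, $I$, or $n$. I assume the predicates $\ell$, $\certadm$, $\kappares$, $\resadm$ are available per row as $O(1)$-size values, and that $\pi$ is a constant-size table. The phases, in order, are: (a) computing $\supp(r)$ and $\partic(r)$ from $H_r$; (b) grouping rows by $e(r)$; (c) the per-bucket measurability check and record construction; (d) \textsc{QuotientAdmission}.

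For (a), I scan $\mathcal W_r$ once. Then $\supp(r)=1$ iff some hyperedge is nonempty. For $\partic$, I test membership of $q_r$ in each $W$ by a linear scan of $W$. The candidate-support surface is nonempty iff some $W$ contains $q_r$, because such a $W$ is itself nonempty. So there is no need to materialize $\starop_r(q_r)$; this keeps the cost at $O(1+\sum_{W\in\mathcal W_r}|W|)$ per row, hence $O(I+n)$ in total. I charge one unit per row to cover rows with empty $\mathcal W_r$; otherwise $I$ alone would not bound the empty-hyperedge overhead. To keep this within $I$, I count empty hyperedges with weight $1$ in the encoding.

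For (c), each bucket is checked in time linear in its size: compare every predicate value with the first row's value, and form $\labels(A)$ and $\plabels(A)$ as subsets of constant-size sets. This costs $O(n)$ over all buckets. Phase (d) does $O(1)$ work per atom, and $|\Pe|\le n$. Space is dominated by the stored evidence, the hypergraphs, and the buckets, giving $O(B+I+n)$.

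The main obstacle is phase (b), and I treat it in three regimes, matching the three bounds in the statement. With hashing, I use a universal hash on the encoded vectors. Computing $h(e(r))$ costs $O(|e(r)|+1)$. A lookup in a table of size $O(n)$ compares the key with expected $O(1)$ colliding keys. Each colliding key is compared with an early-exit comparison costing $O(|e(r)|+1)$ at most. Summing gives expected $O(B+n)$ time and space.

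For the deterministic bound, I sort the rows lexicographically by $e(r)$. By the stated key-linear assumption, this sort costs $O(B+n\log n)$. One could realize it, for example, by string-sorting methods whose inspection cost is $O(B+n\log n)$, but here I simply invoke the hypothesis. I then form atoms by a single pass comparing adjacent keys, in $O(B+n)$. Without the assumption, I use an ordinary comparison sort with $O(n\log n)$ comparisons, each costing at most $O(L)$. That gives $O(n\log n\cdot L)$, plus $O(B)$ to read and store the encodings, plus $O(I)$ from phase (a).

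Summing the phases in each regime yields the three stated bounds. Throughout, I need to be careful to add the $+n$ term so that rows with empty evidence encodings are still paid for.
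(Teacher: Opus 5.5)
Your proof is correct and follows the paper's argument. Both charge hyperedge scanning to $I$, evidence hashing or sorting to $B$ (with the $n\log n$ or $n\log n\cdot L$ comparison terms in the deterministic regimes), and bucket aggregation and the admission pass to $n$. Your extra bookkeeping (early-exit collision accounting, testing $q_r\in W$ instead of building $\starop_r(q_r)$, and the $+n$ for empty encodings) only fills in details the paper leaves implicit; the reweighting of empty hyperedges is unnecessary, since $\mathcal W_r$ is a set and contains $\varnothing$ at most once, which your per-row unit already pays for.
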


\begin{proof}
Computing support and candidate participation scans each hyperedge incidence once, contributing $O(I)$. Reading and hashing the evidence vectors contributes $O(B)$ expected time and stores $O(B+n)$ encoded data and row references. Bucket aggregation scans each row and each stored predicate value a constant number of times, contributing $O(n)$. \textsc{QuotientAdmission} scans the atom records once, which is $O(n)$ in the worst case because there are at most $n$ atoms. Thus the expected hashing bound is $O(B+I+n)$ time and space.

For the deterministic bound, sort the explicitly encoded evidence vectors lexicographically and group equal consecutive vectors. Under a key-linear comparison model, the total number of inspected evidence symbols over the sorting and grouping phase is bounded by $O(B+n\log n)$; this holds, for example, when encoded vectors are interned, radix-sorted over a finite alphabet, or compared using longest-common-prefix accounting. The remaining scans contribute $O(I+n)$, giving $O(B+I+n\log n)$ time and $O(B+I+n)$ space. If comparisons are charged by worst-case key length without such accounting, sorting contributes $O(n\log n\cdot L)$ for $L=\max_r |e(r)|$, yielding the stated deterministic fallback bound.
\end{proof}

\section{Indistinguishability Lower Bound}\label{sec:lower-bound}
The next theorem isolates a lower bound for evaluators that observe only residual magnitudes. It is a finite indistinguishability argument: collapsed atoms cannot be separated by an oracle whose observation is constant on their union.

\begin{definition}[Magnitude-only evaluator]
Let $\resmag:\Rset\to M$ be a residual-magnitude map into a finite or countable set $M$. A magnitude-only evaluator is a map $g\circ\resmag:\Rset\to\Aset$ for some $g:M\to\Aset$.
\end{definition}

\begin{theorem}[Magnitude-only oracle lower bound]\label{thm:magnitude-lower}
Suppose there exist atoms $A,B\in\Pe$ and a value $m_0\in M$ such that $\resmag(r)=m_0$ for every $r\in A\cup B$. Suppose also that $A$ has residual label and satisfies the residual guard,
\[
\plabels(A)=\{\Rls\},\qquad \resadm(A)=1,\qquad \kappares(A)=1,
\]
while $B$ does not satisfy the residual guard or does not have projected residual label. Then no magnitude-only evaluator can agree with the maximal quotient admission map on $A\cup B$.
\end{theorem}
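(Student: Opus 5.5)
The argument is a direct indistinguishability proof. Any magnitude-only evaluator $g\circ\resmag$ is constant on $A\cup B$, while the maximal quotient admission map must take different values on $A$ and on $B$. First I will pin down the value of the maximal map $F$ (equivalently, the output of \textsc{QuotientAdmission}, by Theorem~\ref{thm:correct-maximal}) on each of the two atoms. Then I will compare those values with the single value $g(m_0)$.

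On $A$, the hypothesis $\plabels(A)=\{\Rls\}$ means the first branch does not fire, so $A$ is not blocked. The certificate branch also fails, since $\plabels(A)\neq\{\Cset\}$. The residual branch fires because $\resadm(A)=1$ and $\kappares(A)=1$. Hence $F(A)=\Rls$. Maximality gives the same conclusion, since the definition of a maximal feasible map forces $\Rls$ whenever the projected label and residual guard allow it.

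On $B$, I will split into cases.
\begin{itemize}
\item If $|\plabels(B)|>1$, then $F(B)=\Block$ by (F2).
\item If $\plabels(B)$ is a singleton other than $\{\Rls\}$, then $F(B)\in\{\Cset,\Lset\}$.
\item If $\plabels(B)=\{\Rls\}$, then by hypothesis the residual guard fails, so the residual branch is skipped. The certificate branch is also skipped, and $F(B)=\Lset$.
\end{itemize}
In every case $F(B)\neq\Rls$. The guard (F4) itself already rules out $F(B)=\Rls$ whenever the label is not $\Rls$ or the guard fails, so this also holds for any feasible maximal map, not just the algorithm's output.

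To finish, let $g:M\to\Aset$ be arbitrary. For $r\in A$ we have $(g\circ\resmag)(r)=g(m_0)$, and the same holds for $r'\in B$. Agreement with $F$ on $A\cup B$ would require $g(m_0)=F(A)=\Rls$ and $g(m_0)=F(B)\neq\Rls$, which is a contradiction. The atoms are nonempty as fibers of points of $\operatorname{im}(e)$, so such $r$ and $r'$ exist.

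The only delicate point is making the case analysis on $B$ airtight, in particular confirming that $F(B)\neq\Rls$ in the blocked case and in the case of a different projected label. I will handle this uniformly by invoking (F2) and (F4) rather than tracing the algorithm's branches.
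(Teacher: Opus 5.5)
Your proof is correct and follows the same route as the paper: a magnitude-only evaluator is the constant $g(m_0)$ on $A\cup B$, while the maximal map gives $\Rls$ on $A$ (via Theorem~\ref{thm:correct-maximal}) and a value other than $\Rls$ on $B$. Your case split on $\plabels(B)$ using (F2) and (F4), together with the remark that atoms are nonempty, only spells out what the paper states in one line.
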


\begin{proof}
Every magnitude-only evaluator assigns one value $g(m_0)$ to all rows in $A\cup B$. The maximal quotient admission map assigns $\Rls$ on $A$ by Theorem~\ref{thm:correct-maximal}. It does not assign $\Rls$ on $B$ because the projected residual label or residual guard is absent there. Therefore no single value $g(m_0)$ can agree with the quotient map on both atoms.
\end{proof}

\begin{corollary}[Support compatibility cannot be recovered after collapse]
If residual magnitudes collapse two atoms that differ in support compatibility and residual admission, then residual admission is not identifiable from residual magnitude alone.
\end{corollary}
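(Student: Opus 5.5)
The corollary will follow as a direct specialization of Theorem~\ref{thm:magnitude-lower}. I will read its hypothesis as follows. There are atoms $A,B\in\Pe$ and a value $m_0\in M$ with $\resmag\equiv m_0$ on $A\cup B$. The two atoms differ in support compatibility and in residual admission: $\kappares(A)=1$ and $\kappares(B)=0$, and the maximal quotient admission map $F$ from Theorem~\ref{thm:correct-maximal} satisfies $F(A)=\Rls$ and $F(B)\neq\Rls$. Up to swapping names, the atom with the residual decision can always be called $A$.

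\textbf{Step 1: check the hypotheses of Theorem~\ref{thm:magnitude-lower}.}
\begin{enumerate}[leftmargin=2em]
\item Since $F(A)=\Rls$, the branch structure of \textsc{QuotientAdmission} gives $\plabels(A)=\{\Rls\}$, $\resadm(A)=1$ and $\kappares(A)=1$. Only the residual branch outputs $\Rls$.
\item For $B$, the failure of the residual guard is immediate from $\kappares(B)=0$. This holds whatever the projected label set of $B$ is.
\item If $B$ happens to be blocked because its projected labels are nonconstant, then $B$ still "does not have projected residual label" in the sense of the theorem. So the hypothesis on $B$ holds in all cases.
\end{enumerate}

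\textbf{Step 2: apply the theorem.} Theorem~\ref{thm:magnitude-lower} then shows that no $g:M\to\Aset$ makes $g\circ\resmag$ agree with $F$ on $A\cup B$. Concretely, $g(m_0)$ would have to equal $\Rls$ on $A$ and differ from $\Rls$ on $B$.

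\textbf{Step 3: translate to identifiability.} The class $P=\{r: F(q_e(r))=\Rls\}$ contains $A$ and is disjoint from $B$. Suppose $P$ were identifiable from residual magnitude alone, meaning $P=\resmag^{-1}(S)$ for some $S\subseteq M$. Then $P$ would contain either all of $A\cup B$ or none of it, because both atoms lie in the fiber $\resmag^{-1}(m_0)$. That is a contradiction. This is the analogue of the atom-union characterization, Theorem~\ref{thm:atom-union}, applied with $\resmag$ in place of $e$; the second part of that argument is just the finite-fiber reasoning, so I will cite it rather than redo it.

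The only real obstacle is interpretive: fixing what "differ in residual admission" means. I will take it to mean that $F$ assigns $\Rls$ to exactly one of the two atoms, so that Theorem~\ref{thm:magnitude-lower} applies verbatim. The rest is immediate.
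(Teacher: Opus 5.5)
Your proof is correct and follows the paper's route. The paper's proof is the one-line remark that the corollary is Theorem~\ref{thm:magnitude-lower} with the difference between the two atoms witnessed by $\kappares$ and $\resadm$; your Steps 1--2 are exactly that specialization, and Step 3 only restates the conclusion as non-measurability with respect to $\resmag$. Your explicit reading, in which one atom $A$ receives $\Rls$ from the maximal map, is in fact what makes that citation legitimate: atoms that merely differ in $\kappares$ and $\resadm$ need not include one with $\plabels(A)=\{\Rls\}$ and both residual guards equal to $1$.
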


\begin{proof}
This is Theorem~\ref{thm:magnitude-lower} with the difference between the two atoms witnessed by $\kappares$ and $\resadm$.
\end{proof}

\FloatBarrier
\section{Finite Example}\label{sec:example}
Table~\ref{tab:finite-example} gives a finite instance with five atoms. Its evidence vector is $(\resmag,\supp,\partic,\certadm,\kappares,\resadm)$. The example illustrates the obstruction caused by projected-label conflicts and the distinction between semantic labels and quotient decisions.

\begin{table}[H]
\centering
\caption{Finite example of quotient admission.}
\label{tab:finite-example}
\footnotesize
\setlength{\tabcolsep}{3pt}
\renewcommand{\arraystretch}{1.08}
\begin{tabular}{@{}L{0.10\linewidth}L{0.29\linewidth}L{0.10\linewidth}L{0.20\linewidth}L{0.18\linewidth}@{}}
\toprule
Rows & Evidence vector & Atom & Semantic label & Quotient decision \\
\midrule
$a$ & $(0.08,1,1,1,0,0)$ & $A_1$ & $\Cset$ & $\Cset$ \\
$b$ & $(0.24,1,0,0,1,1)$ & $A_2$ & $\Rls$ & $\Rls$ \\
$c$ & $(0.31,1,1,1,1,1)$ & $A_3$ & $\Dset$ & $\Lset$ \\
$d,e$ & $(0.19,0,0,0,0,0)$ & $A_4$ & $\Dset,\Lset$ & $\Lset$ \\
$u$ & $(0.11,1,1,1,1,1)$ & $A_5$ & $\Cset$ & $\Block$ \\
$v$ & $(0.11,1,1,1,1,1)$ & $A_5$ & $\Rls$ & $\Block$ \\
\bottomrule
\end{tabular}
\end{table}

\textsc{BuildAtoms} groups rows with equal evidence vectors, so $d$ and $e$ form one atom and $u$ and $v$ form one atom. The atom $\{d,e\}$ maps to $\Lset$ because $\pi(\Dset)=\pi(\Lset)=\Lset$, so its projected label is constant even though its semantic labels differ. The atom $\{u,v\}$ is blocked because $\pi(\Cset)=\Cset$ and $\pi(\Rls)=\Rls$, giving a projected-label conflict that cannot be resolved without refining the evidence partition.

Rows $a$, $b$, and $c$ give nonconflicting atoms. The first atom satisfies the certificate label and guard, the second satisfies the residual label and guard, and the third carries dual information that projects to low-confidence.

\section{Conclusion}
Quotient admission is a finite combinatorial problem: an evidence map induces atoms, identifiable classes are unions of atoms, witness-support hypergraphs impose certificate guards, residual admission requires atom-level compatibility, and projected exact-twin conflicts are blocked. The resulting algorithms compute the maximal guard-respecting atom-level decision map under the no-hidden-refinement rule, with linear expected time under hashing and a model-explicit deterministic sorting bound. Magnitude-only evaluators cannot recover residual decisions after they collapse atoms whose support compatibility differs.

\bibliographystyle{plain}
\bibliography{refs}

\end{document}